\newtheorem{theorem}{Theorem}
\theoremstyle{remark}
\newtheorem*{remark}{Remark}
\theoremstyle{definition}
\newtheorem{definition}{Definition}
\newtheorem{example}{Example}
\title{Explicit C*-algebraic Protocol for Exact Universal Embezzlement of Entanglement}
\author{Li Liu}
\affil{University of Copenhagen}
\date{}
\begin{document}

\maketitle
\begin{abstract}
We present an explicit construction of a universal embezzlement protocol in the C*-algebraic model of quantum information, that is equivalent to the commuting operator model. Our protocol enables exact embezzlement of arbitrary bipartite pure states using a single, fixed catalyst state. Unlike prior constructions that achieve only approximate embezzlement or require state-dependent catalysts, our approach is both exact and state-independent. The construction is explicit, based on simple *-automorphisms acting locally on infinite tensor products of CAR algebras with the underlying idea of the Hilbert hotel. In the dense-state case, the protocol naturally recovers the Type III$_1$ factor via the GNS construction, consistent with recent classification results\cite{van2024embezzlement}. We further extend the construction to allow exact embezzlement of all states, at the cost of working with a non-separable C*-algebra. Despite the increase in algebraic size, the operational structure remains simple and localized. This offers a conceptually intuitive model for universal entanglement embezzlement in infinite-dimensional settings.
\end{abstract}
\section{Introduction}

Entanglement is a fundamental resource in quantum information theory that cannot be generated by local operations and classical communication. Embezzlement of entanglement refers to the striking phenomenon where entanglement can be “borrowed” from a large catalyst state and “restored” afterward, enabling the effective creation of entanglement through only local unitaries.  Mathematically, let $\ket\psi \in \mathcal{H}$ be a state shared between Alice and Bob, and let $\ket\phi \in \mathbb{C}^n \otimes \mathbb{C}^n$ be another shared state. An embezzlement protocol implements the mapping
\begin{equation}
\ket\psi \otimes\ket{00} \mapsto \ket{\psi}\otimes\ket\phi
\end{equation}
using only local operations by Alice and Bob. Without loss of generality, we assume that $\ket\phi$ is entangled, as the problem becomes trivial otherwise.

Since its introduction in \cite{van2002embezzling}, embezzlement has played a central role in the study of nonlocal correlations, resource theories, and the structure of quantum strategies. There has been numerous recent studies on characterization of embezzlement protocols \cite{Schwartzman:2024complexity, vanLuijk:2024multipartite, van2024embezzlement,  zanoni2024complete}.

In the original protocol of van Dam and Hayden \cite{van2002embezzling}, approximate embezzlement is achieved using a family of finite-dimensional catalyst states, each tailored to a target precision. The protocol is universal but not exact: for each target state, one obtains an approximation with error vanishing in the limit of growing catalyst dimension. Later work by Cleve et al. \cite{cleve2017perfect} constructed an exact embezzlement protocol, but only for a fixed target state (e.g., a Bell pair) using infinite-dimensional systems in the commuting operator model.

Recent non-constructive results \cite{van2024embezzlement} showed that exact universal embezzlement is possible using a single catalyst state in a Type III$_1$ von Neumann algebra. However, no explicit protocol has been known that achieves this -- either exactly or universally -- using a single, well-defined catalyst.

\vspace{1em}
\noindent\textbf{Our Contribution.} In this work, we construct the \emph{first explicit protocol for exact universal embezzlement} using a single catalyst state. Our protocol is formulated in the C*-algebraic model, which captures locality through tensor products of C*-algebras and is known to be equivalent to the commuting operator model via the GNS construction. 

The construction is exact: it perfectly embezzles any target entangled state (within a dense set, or all states in the non-separable case). It is universal: the same catalyst state enables embezzlement of all target states. And most importantly, it is \emph{explicit}: we describe a concrete, intuitive protocol based on shift and swap operations, inspired by the Hilbert hotel. The resulting catalyst state may appear large — as it is built from infinitely many copies of component states — but the overall complexity of the system remains infinite-dimensional, and the underlying mechanism is clean and structurally transparent. The idea behind the protocol is similar to the ones mentioned in \cite{leung2013coherent}, except their result is in the approximate finite dimensional setting.

We begin by reviewing the C*-algebraic model and the notion of embezzlement in this setting. We then describe our Bell-pair embezzlement protocol using CAR algebras, and extend it to arbitrary 2-qubit and $2n$-qubit states. Next, we construct an explicit protocol for exact embezzlement of a dense set of bipartite entangled states using a single infinite-dimensional catalyst state. This already achieves universal embezzlement in a strong sense and aligns with recent classification results. We then extend our construction to enable exact embezzlement of \emph{all} bipartite pure states. This extension requires moving beyond separable algebras to a non-separable C*-algebraic setting. While this increases the algebraic complexity, it does not alter the underlying intuition or the structure of the protocol, which remains elementary and explicit..

\section{The C*-model and Bell-state Embezzlement}

\subsection{Introduction to the C*-Algebraic Model}

Before delving into the construction of a universal embezzlement protocol, we begin with a fundamental building block: the single-state embezzlement protocol described in \cite{cleve2017perfect}, now translated into the C*-algebraic framework. This translation has been carried out in my Ph.D. thesis \cite{thesis}; however, due to its central role in the overall construction, we restate the key elements here. We start by reviewing the C*-algebraic model for quantum information, also discussed in \cite{cleve2022constant}.

The goal of the C*-algebraic model is to describe quantum systems using C*-algebras rather than Hilbert spaces. In this framework, each quantum system is associated with a C*-algebra. The composition of local quantum systems is represented by the tensor product of their corresponding local C*-algebras, which forms the global C*-algebra of the combined system. States of quantum systems are defined as abstract states on the C*-algebra: given a C*-algebra $\mathcal{A}$, a state is a linear functional $s: \mathcal{A} \to \mathbb{C}$ such that $s \geq 0$ and $s(1) = 1$.

Measurements are modeled by POVMs whose elements are drawn from the C*-algebra, and dynamical transformations (such as unitary gates) are modeled by *-automorphisms of the algebra. Since there are multiple notions of tensor products for C*-algebras, we adopt the \emph{maximal tensor product} $\otimes_{\max}$ throughout, which can be shown to be equivalent to the commuting operator model. Unless otherwise specified, all tensor products should be interpreted as $\otimes_{\max}$.

The following table summarizes the comparison between the Hilbert space model and the C*-algebraic model:

\begin{table}[h]
\centering
\begin{tabular}{|l|l|l|}
\hline
& Hilbert space model & C*-algebraic model \\
\hline\hline
Quantum system & Hilbert space $\mathcal{H}$ & C*-algebra $\mathcal{A}$ \\
\hline
Composition of local systems & $\mathcal{H}_1 \otimes \mathcal{H}_2$ & $\mathcal{A}_1 \otimes \mathcal{A}_2$ \\
\hline
States & Density operators on $\mathcal{H}$ & Unital positive linear functionals on $\mathcal{A}$ \\
\hline
Measurements & POVMs on $\mathcal{H}$ & POVMs with elements from $\mathcal{A}$ \\
\hline
Dynamics & Unitary operators on $\mathcal{H}$ & *-automorphisms of $\mathcal{A}$ \\
\hline
\end{tabular}
\caption{Hilbert space model vs. C*-algebraic model}
\label{tab:cstar_vs_hilbert}
\end{table}

An abstract state $s$ generalizes the notion of a density matrix $\rho$ via the relation:
\begin{equation}
s(A) = \operatorname{Tr}(\rho \pi(A)),
\end{equation}
where $\pi(A)$ is a representation of the element $A \in \mathcal{A}$ as an operator on the Hilbert space in which $\rho$ resides.

As discussed in \cite{cleve2022constant}, *-automorphisms are used to model dynamics because they are the natural generalization of unitary transformations: they preserve both the algebraic structure and the norm of the C*-algebra. While this definition restricts attention to reversible operations, it is also sufficient for general quantum channels: by Stinespring's dilation theorem, any quantum channel can be implemented as a *-automorphism on a larger system.

\subsection{Definition of Embezzlement in the C*-Algebraic Model}

We begin by defining two-qubit entanglement within the C*-algebraic framework. The definition naturally extends to $n$-qubit systems by appropriately generalizing the target state.

Let $\mathcal{A}$ and $\mathcal{B}$ be the C*-algebras associated with Alice and Bob’s local quantum systems, respectively. Let $\ket{\phi} \in \mathbb{C}^2 \otimes \mathbb{C}^2$ be a two-qubit entangled target state. Define the following states on matrix algebras:
\begin{align}
s_0 : \mathbb{M}_2 &\to \mathbb{C}, \quad s_0(M) = \bra{0} M \ket{0}, \\
s_\phi : \mathbb{M}_2 \otimes \mathbb{M}_2 &\to \mathbb{C}, \quad s_\phi(M) = \bra{\phi} M \ket{\phi}.
\end{align}

A \emph{universal embezzlement protocol} in the C*-model consists of a collection $\{s, \alpha_{A,\phi}, \alpha_{B,\phi}\}_{\ket{\phi}}$, where:
\begin{itemize}

\item $s : \mathcal{A} \otimes \mathcal{B} \to \mathbb{C}$ is a fixed state (the catalyst),

\item $\alpha_{A,\phi}$ is a *-automorphism on $\mathcal{A} \otimes \mathbb{M}_2$,

\item $\alpha_{B,\phi}$ is a *-automorphism on $\mathcal{B} \otimes \mathbb{M}_2$,
\end{itemize}

such that for every two-qubit target state $\ket{\phi} \in \mathbb{C}^2 \otimes \mathbb{C}^2$, the following embezzlement condition holds:
\begin{equation}
(s \otimes s_0 \otimes s_0)\left( (\alpha_{A,\phi} \otimes \alpha_{B,\phi})(X) \right) = (s \otimes s_\phi)(X)
\end{equation}
for all $X \in \mathcal{A} \otimes \mathcal{B} \otimes \mathbb{M}_2 \otimes \mathbb{M}_2$.

Here, we suppress the reordering of the algebras required for tensor compatibility in the expression $(\alpha_{A,\phi} \otimes \alpha_{B,\phi})(X)$ for readability; formally, the *-automorphisms act on the respective subsystems involving the ancilla registers in a consistent manner.

We note that applying the GNS construction to our C*-algebraic embezzlement protocol yields a corresponding embezzlement protocol in the commuting operator model. A key detail in this translation is that our protocol relies on local operations implemented as outer *-automorphisms on the C*-algebras associated to Alice and Bob. To make the GNS construction applicable, we must first extend the original C*-algebra by cross product so that these outer *-automorphisms become inner in the extended algebra. The GNS construction can then be applied to the extended algebra and the given state, producing a representation in which the embezzlement protocol is realized via unitary operations.

\subsection{Bell-State Embezzlement Intuition}

Fixing $\ket{\phi}$ to be the Bell state
\begin{equation}
    \ket\phi = \frac{\ket{00} + \ket{11}}{\sqrt 2},
\end{equation}
we get the single-state embezzlement protocol capable of embezzling a Bell state, which is precisely the protocol introduced in \cite{cleve2017perfect}. We now describe this Bell-state embezzlement protocol within the C*-algebraic framework, inspired by the commuting operator model in \cite{cleve2017perfect} and its underlying intuition—the \emph{Hilbert hotel} analogy.

The core idea of the protocol is straightforward: If the catalyst state $\ket{\psi}$ consists of infinitely many Bell pairs along with infinitely many $\ket{00}$ states, one can “shift” all Bell states toward the $\ket{00}$ region, then swap the Bell state at the intersection with an external $\ket{00}$ pair. This process effectively produces a Bell state outside, while leaving the catalyst state unchanged.

This intuition can be visualized in the following sequence of figures.

\begin{figure}[h!]
    \centering
\begin{tikzpicture}[
qubitnode/.style={circle, draw=black!60, fill=white!5, very thick, minimum size=5mm},
]
\node[qubitnode]   (A1)                       {};
\node[qubitnode]   (A2)     [right=of A1]     {};
\node[qubitnode]   (A3)     [right=of A2]     {};
\node[qubitnode]   (A4)     [right=of A3]     {};
\node[qubitnode]   (A5)     [right=of A4]     {};
\node[qubitnode]   (A6)     [right=of A5]     {};
\node[qubitnode]   (A7)     [right=of A6]     {};
\node[qubitnode]   (A8)     [right=of A7]     {};
\node[below right=0.3cm and 0.5cm of A8] {$\cdots$};
\node[below left=0.3cm and 0.5cm of A1] {$\cdots$};
\node[qubitnode]   (B1)     [below=of A1]     {};
\node[qubitnode]   (B2)     [right=of B1]     {};
\node[qubitnode]   (B3)     [right=of B2]     {};
\node[qubitnode]   (B4)     [right=of B3]     {};
\node[qubitnode]   (B5)     [right=of B4]     {};
\node[qubitnode]   (B6)     [right=of B5]     {};
\node[qubitnode]   (B7)     [right=of B6]     {};
\node[qubitnode]   (B8)     [right=of B7]     {};
\node              (0)      [right=of B8]     {$\ $};
\node[]            (A)      [left=of A1]      {Alice};
\node[]            (B)      [left=of B1]      {Bob};

\draw (A5.south) -- (B5.north);
\draw (A6.south) -- (B6.north);
\draw (A7.south) -- (B7.north);
\draw (A8.south) -- (B8.north);

\node[]            (i-3)    [above=0cm of A1]     {-3};
\node[]            (i-2)    [above=0cm of A2]     {-2};
\node[]            (i-1)    [above=0cm of A3]     {-1};
\node[]            (i0)     [above=0cm of A4]     {0};
\node[]            (i1)     [above=0cm of A5]     {1};
\node[]            (i2)     [above=0cm of A6]     {2};
\node[]            (i3)     [above=0cm of A7]     {3};
\node[]            (i4)     [above=0cm of A8]     {4};
\node[]            (j-3)    [below=0cm of B1]     {-3};
\node[]            (j-2)    [below=0cm of B2]     {-2};
\node[]            (j-1)    [below=0cm of B3]     {-1};
\node[]            (j0)     [below=0cm of B4]     {0};
\node[]            (j1)     [below=0cm of B5]     {1};
\node[]            (j2)     [below=0cm of B6]     {2};
\node[]            (j3)     [below=0cm of B7]     {3};
\node[]            (j4)     [below=0cm of B8]     {4};
\end{tikzpicture}
    \caption{Starting State of Embezzlement}
    \label{fig:emb_begin} 
\end{figure}
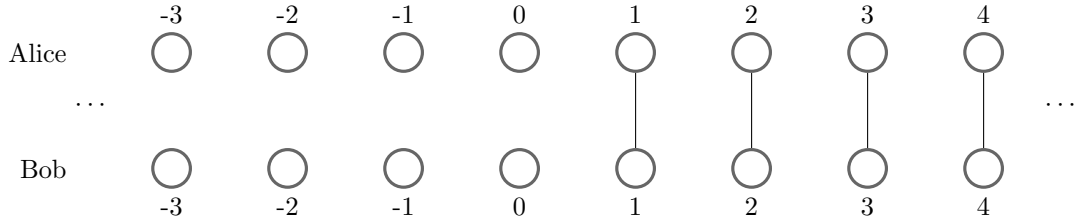

We begin with an infinite sequence of qubits indexed by the integers $\ldots, -2, -1, 0, 1, 2, \ldots$. The qubit pairs at positive indices are maximally entangled Bell states, indicated by lines connecting the qubits, while those at non-positive indices are initialized in the product state $\ket{0}$.

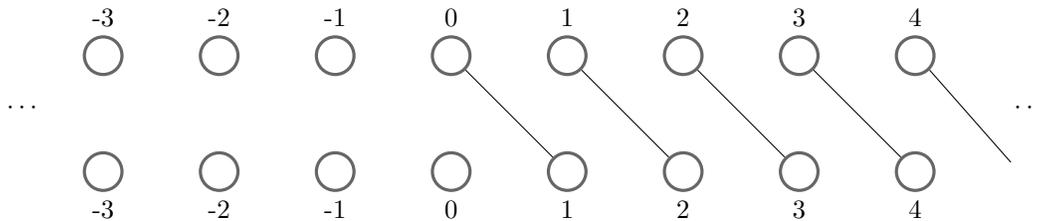
\begin{figure}[h!] 
    \centering
\begin{tikzpicture}[
qubitnode/.style={circle, draw=black!60, fill=white!5, very thick, minimum size=5mm},
]
\node[qubitnode]   (A1)                       {};
\node[qubitnode]   (A2)     [right=of A1]     {};
\node[qubitnode]   (A3)     [right=of A2]     {};
\node[qubitnode]   (A4)     [right=of A3]     {};
\node[qubitnode]   (A5)     [right=of A4]     {};
\node[qubitnode]   (A6)     [right=of A5]     {};
\node[qubitnode]   (A7)     [right=of A6]     {};
\node[qubitnode]   (A8)     [right=of A7]     {};
\node[below right=0.3cm and 1 cm of A8] {$\cdots$};
\node[below left=0.3cm and 0.5 cm of A1] {$\cdots$};

\node[qubitnode]   (B1)     [below=of A1]     {};
\node[qubitnode]   (B2)     [right=of B1]     {};
\node[qubitnode]   (B3)     [right=of B2]     {};
\node[qubitnode]   (B4)     [right=of B3]     {};
\node[qubitnode]   (B5)     [right=of B4]     {};
\node[qubitnode]   (B6)     [right=of B5]     {};
\node[qubitnode]   (B7)     [right=of B6]     {};
\node[qubitnode]   (B8)     [right=of B7]     {};
\node              (B9)      [right=of B8]     {$\ $};

\draw (A4.south east) -- (B5.north west);
\draw (A5.south east) -- (B6.north west);
\draw (A6.south east) -- (B7.north west);
\draw (A7.south east) -- (B8.north west);
\draw (A8.south east) -- (B9.north west);

\node[]            (i-3)    [above=0cm of A1]     {-3};
\node[]            (i-2)    [above=0cm of A2]     {-2};
\node[]            (i-1)    [above=0cm of A3]     {-1};
\node[]            (i0)     [above=0cm of A4]     {0};
\node[]            (i1)     [above=0cm of A5]     {1};
\node[]            (i2)     [above=0cm of A6]     {2};
\node[]            (i3)     [above=0cm of A7]     {3};
\node[]            (i4)     [above=0cm of A8]     {4};
\node[]            (j-3)    [below=0cm of B1]     {-3};
\node[]            (j-2)    [below=0cm of B2]     {-2};
\node[]            (j-1)    [below=0cm of B3]     {-1};
\node[]            (j0)     [below=0cm of B4]     {0};
\node[]            (j1)     [below=0cm of B5]     {1};
\node[]            (j2)     [below=0cm of B6]     {2};
\node[]            (j3)     [below=0cm of B7]     {3};
\node[]            (j4)     [below=0cm of B8]     {4};
\end{tikzpicture}
    \caption{Left shift of Alice's Qubits by 1}
    \label{fig:emb_Alice_left1} 
\end{figure}

To perform embezzlement, Alice and Bob each shift their respective qubits to the left by one position. Figure~\ref{fig:emb_Alice_left1} illustrates the intermediate state after Alice performs the shift alone, which results in a misalignment of the entangled pairs between Alice and Bob. Once Bob performs the corresponding left shift, the entanglement realigns, with the qubit pair at index 0 forming a Bell state.

\begin{figure}[h!]
    \centering
\begin{tikzpicture}[
qubitnode/.style={circle, draw=black!60, fill=white!5, very thick, minimum size=5mm},
]
\node[qubitnode]   (A1)                       {};
\node[qubitnode]   (A2)     [right=of A1]     {};
\node[qubitnode]   (A3)     [right=of A2]     {};
\node[qubitnode]   (A4)     [right=of A3]     {};
\node[qubitnode]   (A5)     [right=of A4]     {};
\node[qubitnode]   (A6)     [right=of A5]     {};
\node[qubitnode]   (A7)     [right=of A6]     {};
\node[qubitnode]   (A8)     [right=of A7]     {};
\node[below right=0.3cm and 0.5cm of A8] {$\cdots$};
\node[below left=0.3cm and 0.5cm of A1] {$\cdots$};
\node[qubitnode]   (B1)     [below=of A1]     {};
\node[qubitnode]   (B2)     [right=of B1]     {};
\node[qubitnode]   (B3)     [right=of B2]     {};
\node[qubitnode]   (B4)     [right=of B3]     {};
\node[qubitnode]   (B5)     [right=of B4]     {};
\node[qubitnode]   (B6)     [right=of B5]     {};
\node[qubitnode]   (B7)     [right=of B6]     {};
\node[qubitnode]   (B8)     [right=of B7]     {};
\node              (0)      [right=of B8]     {$\ $};
\draw (A4.south) -- (B4.north);
\draw (A5.south) -- (B5.north);
\draw (A6.south) -- (B6.north);
\draw (A7.south) -- (B7.north);
\draw (A8.south) -- (B8.north);

\node[]            (i-3)    [above=0cm of A1]     {-3};
\node[]            (i-2)    [above=0cm of A2]     {-2};
\node[]            (i-1)    [above=0cm of A3]     {-1};
\node[]            (i0)     [above=0cm of A4]     {0};
\node[]            (i1)     [above=0cm of A5]     {1};
\node[]            (i2)     [above=0cm of A6]     {2};
\node[]            (i3)     [above=0cm of A7]     {3};
\node[]            (i4)     [above=0cm of A8]     {4};
\node[]            (j-3)    [below=0cm of B1]     {-3};
\node[]            (j-2)    [below=0cm of B2]     {-2};
\node[]            (j-1)    [below=0cm of B3]     {-1};
\node[]            (j0)     [below=0cm of B4]     {0};
\node[]            (j1)     [below=0cm of B5]     {1};
\node[]            (j2)     [below=0cm of B6]     {2};
\node[]            (j3)     [below=0cm of B7]     {3};
\node[]            (j4)     [below=0cm of B8]     {4};

\node[qubitnode]   (Aq)     [above= of A5]        {};
\node[qubitnode]   (Bq)     [below= of B5]        {};

\draw[thick, <->] (A4.north east) -- (Aq.south west);
\draw[thick, <->] (B4.south east) -- (Bq.north west);

\end{tikzpicture}
    \caption{Swapping Out Qubits at Index 0}
    \label{fig:emb_swap} 
\end{figure}
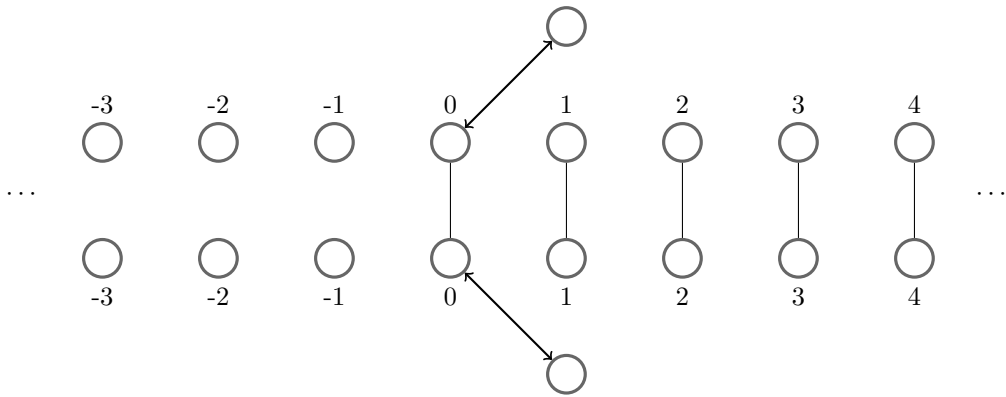

The final step is to swap out the qubit pair at index 0 with an external pair initialized in the $\ket{00}$ state. This swap restores the catalyst state to its original form (as in Figure~\ref{fig:emb_begin}) while producing a Bell pair outside the catalyst system. This completes the embezzlement process.

In \cite{cleve2017perfect}, the construction of the commuting operator protocol is somewhat technical because shifts or swaps performed unilaterally by Alice or Bob cause the resulting Hilbert space to become orthogonal to the original space. As a result, the protocol involves an infinite direct sum of infinite-dimensional Hilbert spaces. In addition, basis change between computational and Bell also needs to be performed at the right time and on right indices during the protocol, which adds further complexity to the construction.

 \subsection{Bell-state Embezzlement with CAR algebra}

We now define the C*-algebraic protocol for Bell-state embezzlement using the exact intuition above with the CAR algebra. 
The CAR algebra provides a simple and intuitive C*-algebraic model for systems with infinitely many qubits. Its elements behave like Pauli operators acting non-trivially on only finitely many sites, making it easy to visualize and work with. A key advantage is that the tensor product of two CAR algebras is again a CAR algebra, which gives a clean way to model local subsystems. This is in contrast to the commuting operator model, where local structure is less transparent and the tensor product is not canonical.
 More details about the CAR algebra can be found in the Appendix.

Recall that a CAR algebra \(\mathcal{R}\) can be viewed as a C*-algebra generated by infinitely many Pauli operators that act non-trivially on only finitely many qubits. Its generators are of the form \(X^a Z^b\), where \(a, b\) are infinite binary strings containing finitely many \(1\)'s, indicating the positions where the Pauli operators are not the identity. For example, \(X^{01} Z^{10} = Z \otimes X \otimes 1_\infty\).

We decompose the state in Figure~\ref{fig:emb_begin} into the \(\ket{00}\) and \(\ket{\psi} = \frac{\ket{00} + \ket{11}}{\sqrt{2}}\) components. Let \(\mathcal{R}\) be a CAR algebra, and define the infinite \(\ket{00}\) state as \(s_{\ket{00}}\) and the infinite Bell state as \(s_{\ket{\psi}}\).

Let \(s_{\ket{00}}: \mathcal{R} \otimes \mathcal{R} \to \mathbb{C}\) be an abstract state such that for any \(X^a Z^b, X^c Z^d \in \mathcal{R}\),
\begin{equation}
    s_{\ket{00}}(X^a Z^b \otimes X^c Z^d) = \prod_{i=0}^\infty \bra{00} X^{a_i} Z^{b_i} \otimes X^{c_i} Z^{d_i} \ket{00}. \label{eq:infinite_00_state}
\end{equation}
Here, \(a_i\) is the \(i\)-th bit of \(a\), and similarly for the others. The first CAR algebra corresponds to Alice's system, and the second corresponds to Bob's.

Similarly, let \(s_{\ket{\psi}}: \mathcal{R} \otimes \mathcal{R} \to \mathbb{C}\) be an abstract state such that for any \(X^a Z^b, X^c Z^d \in \mathcal{R}\),
\begin{equation}
    s_{\ket{\psi}}(X^a Z^b \otimes X^c Z^d) = \prod_{i=0}^\infty \bra{\psi} X^{a_i} Z^{b_i} \otimes X^{c_i} Z^{d_i} \ket{\psi}.
\end{equation}

Combining these two states, we define a state \(s: \mathcal{R} \otimes \mathcal{R} \to \mathbb{C}\) such that for any \(X^a Z^b, X^c Z^d \in \mathcal{R}\),
\begin{equation}
    s(X^a Z^b \otimes X^c Z^d) = \prod_{i=0}^{-\infty} \bra{00} X^{a_i} Z^{b_i} \otimes X^{c_i} Z^{d_i} \ket{00} \prod_{i=1}^\infty \bra{\psi} X^{a_i} Z^{b_i} \otimes X^{c_i} Z^{d_i} \ket{\psi}.
\end{equation}

Note that \(s = s_{\ket{00}} \otimes s_{\ket{\psi}}\), and we use the fact that \(\mathcal{R} \otimes \mathcal{R}\) is isomorphic to \(\mathcal{R}\). $a, b, c, d$ are now two-way infinite strings with finitely many 1's.

Next, we define local operators on Alice’s and Bob’s C*-algebras that correspond to shifting all qubits to the left. Let \(\Sigma\) be the set of all two-way infinite binary strings with finitely many \(1\)'s. Define \(\pi: \Sigma \to \Sigma\) by the shift operation: for all \(a \in \Sigma\), \(\pi(a)_i = a_{i-1}\). Physically, this means shifting all the qubits to the left by 1. Define the linear map \(\alpha_\pi\) on \(\mathcal{R}\) such that for all \(X^a Z^b \in \mathcal{R}\),
\begin{equation}
    \alpha_\pi(X^a Z^b) = X^{\pi(a)} Z^{\pi(b)}.
\end{equation}

We now verify that \(\alpha_\pi\) is a *-automorphism:
\begin{itemize}
    \item *-preservation:
    \begin{eqnarray}
        \alpha_\pi(X^a Z^b)^* = X^{\pi(a)} Z^{\pi(b)} = \alpha_\pi((X^a Z^b)^*).
    \end{eqnarray}
    \item Existence of inverse:
    \begin{equation}
        \alpha_\pi^{-1}(X^a Z^b) = X^{\pi^{-1}(a)} Z^{\pi^{-1}(b)} \quad \text{where} \quad \pi^{-1}(a)_i = a_{i+1}.
    \end{equation}
    \item Homomorphism:

    Since \(X\) and \(Z\) anticommute, we have
    \[
    X^a Z^b = (-1)^{f(a,b)} Z^{b} X^{a},
    \]
    where \(f: \Sigma \times \Sigma \to \{0,1\}\) is defined by
    \[
    f(a,b) = \bigoplus_{i=0}^\infty a_i \wedge b_i,
    \]
    with \(\wedge\) the binary AND and \(\oplus\) the binary XOR. It is clear that
    \[
    f(a,b) = f(\pi(a), \pi(b)) \quad \text{for all } a,b.
    \]
    Then for any \(X^a Z^b, X^c Z^d \in \mathcal{R}\),
    \begin{eqnarray}
        \alpha_\pi(X^a Z^b X^c Z^d) &=& (-1)^{f(b,c)} \alpha_\pi(X^{a \oplus c} Z^{b \oplus d}) \\
        &=& (-1)^{f(\pi(b), \pi(c))} X^{\pi(a) \oplus \pi(c)} Z^{\pi(b) \oplus \pi(d)} \\
        &=& (-1)^{f(\pi(b), \pi(c))} X^{\pi(a)} X^{\pi(c)} Z^{\pi(b)} Z^{\pi(d)} \\
        &=& X^{\pi(a)} Z^{\pi(b)} X^{\pi(c)} Z^{\pi(d)} \\
        &=& \alpha_\pi(X^a Z^b) \alpha_\pi(X^c Z^d).
    \end{eqnarray}
\end{itemize}

Next, we define the swap operation. Let \(\mathbf{a}, \mathbf{b}\) denote single-bit variables, distinguished from \(a, b \in \Sigma\). Define \(\alpha_{\mathrm{swap}}: \mathcal{R} \to \mathbb{M}_2\) as a linear map such that for any \(\mathbf{a}, \mathbf{b} \in \{0,1\}\) and \(a,b \in \Sigma\),
\begin{equation}
    \alpha_{\mathrm{swap}}(X^a Z^b \otimes X^{\mathbf{a}} Z^{\mathbf{b}}) = X^{a'} Z^{b'} \otimes X^{a_0} Z^{b_0},
\end{equation}
where
\[
a'_i = \begin{cases}
    \mathbf{a} & i=0 \\
    a_i & i \neq 0
\end{cases},
\]
and similarly for \(b'\).

It is clear that \(\alpha_{\mathrm{swap}}\) preserves the * operation and is its own inverse. The homomorphism property is straightforward, so we omit its proof.

Now define
\begin{equation}
\alpha = (\alpha_\pi \otimes \mathcal{I}) \circ \alpha_{\mathrm{swap}}. \label{eq:emb_automorphism}
\end{equation}
This \(*\)-automorphism acts on the C*-algebras, not the states, so the order of composition is reversed compared to the order of operations on the states. We can show that
\begin{equation}
    \alpha \otimes \alpha \bigl( X^a Z^b \otimes X^c Z^d \otimes X^{\mathbf{a}} Z^{\mathbf{b}} \otimes X^{\mathbf{c}} Z^{\mathbf{d}} \bigr)
    = X^{\pi(a')} Z^{\pi(b')} \otimes X^{\pi(c')} Z^{\pi(d')} \otimes X^{a_0} Z^{b_0} \otimes X^{c_0} Z^{d_0}.
\end{equation}
Therefore, when we apply the *-automorphisms to the catalyst state, we get
\begin{eqnarray*}
   && s \otimes s_0 \otimes s_0 \bigl(\alpha \otimes \alpha (X^a Z^b \otimes X^c Z^d \otimes X^{\mathbf{a}} Z^{\mathbf{b}} \otimes X^{\mathbf{c}} Z^{\mathbf{d}}) \bigr) \\
   &=& \prod_{i=0}^{-\infty} \bra{00} X^{a'_{i-1}} Z^{b'_{i-1}} \otimes X^{c'_{i-1}} Z^{d'_{i-1}} \ket{00}
   \prod_{i=1}^\infty \bra{\psi} X^{a'_{i-1}} Z^{b'_{i-1}} \otimes X^{c'_{i-1}} Z^{d'_{i-1}} \ket{\psi} \\
   && \times \bra{0} X^{a_0} Z^{b_0} \ket{0} \bra{0} X^{c_0} Z^{d_0} \ket{0} \\
   &=& \prod_{i=-1}^{-\infty} \bra{00} X^{a'_i} Z^{b'_{i-1}} \otimes X^{c'_i} Z^{d'_i} \ket{00}
   \prod_{i=0}^\infty \bra{\psi} X^{a'_i} Z^{b'_i} \otimes X^{c'_i} Z^{d'_i} \ket{\psi} \\
   && \times \bra{0} X^{a_0} Z^{b_0} \ket{0} \bra{0} X^{c_0} Z^{d_0} \ket{0} \\
   &=& \prod_{i=0}^{-\infty} \bra{00} X^{a_i} Z^{b_i} \otimes X^{c_i} Z^{d_i} \ket{00}
   \prod_{i=1}^\infty \bra{\psi} X^{a_i} Z^{b_i} \otimes X^{c_i} Z^{d_i} \ket{\psi} \\
   && \times \bra{\psi} X^{\mathbf{a}} Z^{\mathbf{b}} \otimes X^{\mathbf{c}} Z^{\mathbf{d}} \ket{\psi} \\
   &=& s \otimes s_{\phi} (X^a Z^b \otimes X^c Z^d \otimes X^{\mathbf{a}} Z^{\mathbf{b}} \otimes X^{\mathbf{c}} Z^{\mathbf{d}}),
\end{eqnarray*}
which results in the catalyst state tensor product with the Bell state, as expected. This completes the protocol of embezzlement in the C*-algebraic framework.

\section{Universal Embezzlement Protocol}
\subsection{Exact 2-qubit Embezzlement on a Dense Set}
With Bell-state embezzlement defined in the C*-algebraic model, we now have the basic building block for embezzling arbitrary 2-qubit states. The idea is simple: if we replace the catalyst with infinitely many copies of a fixed 2-qubit state $\ket\psi$, the same protocol lets us embezzle $\ket\psi$ exactly. To support more than one target state, we can build a larger catalyst by taking the tensor product of infinitely many such states. Now consider the family of states $\ket{\phi_q} = q_0\ket{00} + q_1\ket{11}$, where $q = |q_0|^2 / |q_1|^2 \in \mathbb Q$. This gives a countable set of entangled states that is dense in $\mathbb C^2\otimes \mathbb C^2$. Taking the tensor product over infinitely many copies of all $\ket{\phi_q}$ yields a catalyst that enables exact embezzlement of this dense subset, and approximate embezzlement of any 2-qubit state to arbitrary precision.

\begin{theorem}
There exists an explicit universal embezzlement protocol in the C*-model with a single catalyst state that enables exact embezzlement of a dense subset of bipartite states in $\mathbb C^2 \otimes \mathbb C^2$.
\end{theorem}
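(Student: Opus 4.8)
The plan is to build the catalyst by combining the Bell-state embezzlement construction from the previous subsection with a reindexing trick that lets a single fixed state serve every rational target simultaneously. I would proceed as follows.

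\paragraph{Step 1: Single-target embezzlement for each $\ket{\phi_q}$.}
First I would observe that the Bell-state protocol of the previous subsection never used any special property of the Bell state beyond the product formula defining the catalyst state. Replacing $\ket{\psi} = \tfrac{1}{\sqrt2}(\ket{00}+\ket{11})$ throughout by the target $\ket{\phi_q} = q_0\ket{00} + q_1\ket{11}$, I would define states $s_{\ket{00}}^{(q)}$ and $s_{\ket{\phi_q}}^{(q)}$ on $\mathcal{R}\otimes\mathcal{R}$ by the same infinite-product formulas, with the non-positive sites carrying $\ket{00}$ and the positive sites carrying $\ket{\phi_q}$. The identical shift-and-swap automorphism $\alpha = (\alpha_\pi\otimes\mathcal{I})\circ\alpha_{\mathrm{swap}}$ then gives exact embezzlement of $\ket{\phi_q}$ from the catalyst $s^{(q)} = s_{\ket{00}}^{(q)}\otimes s_{\ket{\phi_q}}^{(q)}$, by literally the same computation, since the final display only rewrites the site-wise expectations and re-indexes the product. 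This establishes a per-target protocol; the content of the theorem is to make the catalyst target-\emph{independent}.

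\paragraph{Step 2: Assembling one catalyst over all rationals.}
Enumerate the countable dense family $\{\ket{\phi_q}\}_{q\in\mathbb{Q}}$ and set the global catalyst to be the maximal tensor product $s = \bigotimes_{q\in\mathbb{Q}} s^{(q)}$ over $\mathcal{A}\otimes\mathcal{B}$, where $\mathcal{A} = \mathcal{B} = \bigotimes_{q} \mathcal{R}$. Using $\mathcal{R}\otimes\mathcal{R}\cong\mathcal{R}$ repeatedly (countably many times), this infinite tensor product collapses to a single CAR algebra, so the catalyst state is a genuine fixed state on $\mathcal{R}\otimes\mathcal{R}$, independent of which target is requested. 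For a requested target $\ket{\phi_q}$, Alice and Bob apply the automorphisms $\alpha_{A,\phi_q} = \alpha_{B,\phi_q} = \alpha$ acting only on the $q$-th tensor factor (together with the external ancilla register), leaving every other factor fixed. Because the automorphism is supported on the single $q$-block and the catalyst factorizes as a tensor product of states, the expectation on all other blocks passes through unchanged as a multiplicative constant, and the $q$-block reproduces exactly the computation of Step 1.

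\paragraph{Step 3: Verifying the embezzlement identity and density.}
I would then check the defining condition
\[
(s\otimes s_0\otimes s_0)\bigl((\alpha_{A,\phi_q}\otimes\alpha_{B,\phi_q})(X)\bigr) = (s\otimes s_{\phi_q})(X)
\]
on generators $X$ of the form $X^aZ^b\otimes X^cZ^d\otimes X^{\mathbf{a}}Z^{\mathbf{b}}\otimes X^{\mathbf{c}}Z^{\mathbf{d}}$, extended to the whole algebra by linearity and continuity; the factorization just described reduces this to the single-target identity already verified. Finally, density of $\{\ket{\phi_q}\}$ in $\mathbb{C}^2\otimes\mathbb{C}^2$ follows from the Schmidt decomposition: every bipartite two-qubit pure state is local-unitarily equivalent to $\cos\theta\ket{00}+\sin\theta\ket{11}$, and rational ratios $q = |q_0|^2/|q_1|^2$ are dense in the Schmidt parameter, so absorbing the local unitaries into $\alpha_{A,\phi},\alpha_{B,\phi}$ yields approximate embezzlement of every two-qubit state.

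\paragraph{Main obstacle.}
The step I expect to require the most care is Step 2: justifying that the maximal tensor product of countably many CAR algebras is again (isomorphic to) a single CAR algebra on which the fixed catalyst state is well defined, and that acting by $\alpha$ on one factor while tensoring with the identity on the rest is a bona fide $\ast$-automorphism of the global algebra that cleanly factors through the infinite-product state. In the separable (dense) case this is manageable because the tensor product of CAR algebras is canonically a CAR algebra, but one must confirm that the infinite product defining $s$ converges to a state (positivity and normalization on the dense $\ast$-subalgebra of finitely-supported generators, then extension), and that the locality of $\alpha$ guarantees all spectator factors contribute only an overall factor of $1$.
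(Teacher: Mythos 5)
Your proposal is correct and follows essentially the same route as the paper: a per-target shift-and-swap protocol for each rational-Schmidt-ratio state $\ket{\phi_q}$, a catalyst built as the countably infinite tensor product $\bigotimes_{q} s_q$ over a $\mathbb{Q}$-indexed family of CAR-algebra registers (realized as an inductive limit, hence again a CAR algebra), and target-dependent automorphisms supported on the single $q$-block so that all spectator factors pass through trivially. The obstacle you flag in Step 2 is exactly what the paper addresses by defining $\mathcal{R}_\infty$ as the inductive limit of finite tensor products and evaluating the state only on finitely supported generators.
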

\begin{proof}

We start by defining the family of states we will use to build the catalyst state. Let $\ket{\psi_q} = q_0 \ket{00} + q_1\ket{11}$ for $q = |q_0|^2 / |q_1|^2 \in \mathbb Q^+$. We only need to consider cases where $q_1, q_1\in\mathbb R^+$ because any relative phase between the two coefficients can be achieved by a local unitary $U = \begin{pmatrix} 1 & 0\\ 0 & e^{i\gamma}\end{pmatrix}$ on Alice's side.  These states form a countable set that is dense in the set of entangled 2-qubit states (up to local unitaries), since the Schmidt coefficients of any entangled state can be approximated arbitrarily well by rational ratios.

\begin{equation}
    s_q(X^a Z^b\otimes X^c Z^d) = \prod_{i=0}^{-\infty} \bra{00} X^{a_i}Z^{b_i}\otimes X^{c_i}Z^{d_i}\ket{00}\prod_{i=1}^\infty \bra{\phi_q} X^{a_i} Z^{b_i} \otimes X^{c_i} Z^{d_i}\ket{\phi_q}
\end{equation}
where $a, b, c, d$ are again two-way infnite binary strings with finitey many 1's.

Since $\{q\}$ is a countable set, there is a one-to-one mapping between it and the set of natural numbers $\mathbb N$. We will inductively define Alice's and Bob's C*-algebras as countably infinite copies of CAR algebras using indexing from the natural numbers for clarity, and then changing the indexing to the set of positive rational numbers for our application.

Define
\begin{equation}
    \mathcal R_n := \bigotimes_{i=1}^n \mathcal R
\end{equation}
to be the tensor product of $n$ copies of $\mathcal R$ Let $\iota_n:\mathcal R_n \hookrightarrow \mathcal R_{n+1}$ be the natural embedding 
\begin{equation}
    \iota_n(a) := a\otimes 1.
\end{equation}
Now we can define the infinite tensor product of $\mathcal R$ as the inductive limit
\begin{equation}
    \mathcal R_\infty := \varinjlim \mathcal R_n  = \bigotimes_{i\in\mathbb N} \mathcal R.
\end{equation}
This inductive limit corresponds to the infinite tensor product of matrix algebras with finite support. We note that since $\mathcal R_\infty$ is inductively defined, any generator $a\in\mathcal R_\infty$ can be written in terms of $a_n\otimes 1$ for $a_n\in\mathcal R_n$ for some $n$, and therefore has only finite weight. $\mathcal R_\infty$ is also a CAR algebra.

Since $\mathbb Q^+$ is countable, we may relabel the inde set of the infinite tensor product via any bijection $\mathbb{N}\to\mathbb {Q}^+$ without changing the isomorphism class of the algebra. Now we swap the indexing from $\mathbb N$ to $\mathbb Q^+$, and define Alice and Bob's algebra to be 
\begin{equation}
   \mathcal A\otimes\mathcal B = \bigotimes_{q\in\mathbb Q^+} \mathcal R\otimes\mathcal R.
\end{equation}

Our catalyst state is then defined as $s_\psi:\mathcal A\otimes\mathcal B \to\mathbb C$
\begin{equation}
    s_\psi \left(\bigotimes_{q\in\mathbb Q^+}a_q\right)= \bigotimes_{q\in\mathbb Q^+} s_q (a_q)
\end{equation}
for $a_q\in\mathcal R\otimes\mathcal R$.

Now we have the catalyst state defined. The next step is to define the *-automorphism to embezzle the state $\ket{\phi_q}$.

Consider $\alpha_{q, \mathrm{swap}}:\mathcal R_{\infty}\otimes\mathbb M_2\to \mathcal R_{\infty}\otimes\mathbb M_2$ as the SWAP operation that swaps the qubit on index $(q, 0)$ of $\mathcal R_\infty$ with the outside qubit on $\mathbb M_2$. Consider $A \in\mathcal R_\infty$ where we write $A = \bigotimes_{q\in\mathbb Q^+} X^{a_q} Z^{b_q}$ where $a_q, b_q$ are infinite binary strings with finitely many 1's, and only finitely many $\{a_q, b_q\}_q$ are not all 0 strings. $\mathbf a$ and $\mathbf b$ are again a single bit corresponding to the outside qubit. Then
\begin{equation}
    \alpha_{q,\mathrm{swap}}(A\otimes X^{\mathbf a} Z^{\mathbf b}) = \bigotimes_{p\in\mathbb Q^+, \\ p\neq q} X^{a_p} Z^{b_p} \otimes\alpha_{\mathrm{swap}}(X^{a_q} Z^{b_q}\otimes X^{\mathbf a} Z^{\mathbf b})\label{eq:2-qubit_univ_swap}
\end{equation}
Next we define the shift operator on the $q$-th $\mathcal R$ of $\mathcal R_\infty$ in a similar manner. Let
$\alpha_{q,\pi}:\mathcal R_\infty\to\mathcal R_\infty$ be
\begin{equation}
    \alpha_{q, \pi}(A) = \bigotimes_{p\in\mathbb Q^+, p\neq q} X^{a_p}Z^{b_p} \otimes \alpha_\pi(X^{a_q} Z^{b_q}) \label{eq:2-qubit_univ_shift}
\end{equation}

In the expressions for $\alpha_{q, \mathrm{swap}}$ and $\alpha_{q, \pi}$, we isolate the $\mathcal R$ factor at index $q$ to make the action of the automorphisms explicit. This is purely a notational device -- the position of the algebra indexed by $q$ remains fixed in the overall tensor product, and no reordering of factors is involved.

We note that both the shift and the swap operators are just the shift and swap in the Bell-state embezzlement protocol tensor product with identity everywhere except on index $q$ of $\mathcal R_\infty$. They are therefore also *-automorphisms. Combining the shift and the swap, we get our overall *-automorphism 
\begin{equation}
    \alpha_q = \alpha_{q,\pi}\circ \alpha_{q,\mathrm{swap}}.
\end{equation}

The overall *-automorphism across Alice and Bob to embezzle state $\ket{\phi_q}$ is therefore $\alpha_q\otimes\alpha_q$. The operator only acts non-trivially on the register corresponding to state $s_q$. Therefore its action is exactly the single-state embezzlement protocol on state $s_q$ that produces $\ket{\phi_q}$ as the target state.

This completes the construction: for each $q \in \mathbb{Q}^+$, the *-automorphism $\alpha_q \otimes \alpha_q$ exactly embezzles the state $\ket{\phi_q}$ from the catalyst state $s_\psi$, without disturbing the rest of the system. Since the states $\ket{\phi_q}$ form a countable dense subset of all entangled two-qubit states, we have constructed a single exact embezzlement protocol and catalyst state capable of embezzling a dense set of states in $\mathbb{C}^2 \otimes \mathbb{C}^2$.

To approximately embezzle any state using this catalyst, we just need to pick a $\ket{\phi_q}$ that is $\epsilon$-close to the state we want to embezzle for the desired accuracy $\epsilon$, and perform the embezzlement protocol on it.
\end{proof}

\subsection{Exact $2n$-qubit Embezzlement on a Dense Set}
It is not difficult to extend the 2-qubit embezzlement protocol to the embezzlement of arbitrary $n$-qubit target states. All we need to do is include infinite copies of a dense set of $n$-qubit states in the catalyst, and then perform the embezzlement protocol on the appropriate component.

\begin{theorem}
There exists an explicit universal embezzlement protocol in the C*-algebraic model with a single catalyst state that enables exact embezzlement of a dense subset of bipartite states in $\mathbb C^{2^n} \otimes \mathbb C^{2^n}$ for any $n \in \mathbb N$.
\end{theorem}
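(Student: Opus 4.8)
The plan is to reduce the $2n$-qubit case to the $2$-qubit construction of the previous theorem by treating each catalyst site as a block of $n$ qubits and by absorbing the Schmidt bases into local unitaries. First I would fix a canonical target family. By the Schmidt decomposition, every $\ket{\phi}\in\mathbb C^{2^n}\otimes\mathbb C^{2^n}$ can be written as $(U_A\otimes U_B)\sum_{k=0}^{2^n-1}\sqrt{p_k}\,\ket{k}\ket{k}$ for some local unitaries $U_A,U_B$ and a probability vector $(p_0,\dots,p_{2^n-1})$. I would let $\mathbf q$ range over probability vectors with rational entries and set $\ket{\phi_{\mathbf q}}=\sum_k\sqrt{p_k}\,\ket{k}\ket{k}$; this is a countable family, and together with arbitrary local unitaries it parametrizes a dense subset of $\mathbb C^{2^n}\otimes\mathbb C^{2^n}$, since any probability vector is approximated by rational ones (including vectors with zero entries, which recover the low-Schmidt-rank and product states).

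Next I would build the single-state block protocol exactly as in the Bell case, replacing single qubits by $n$-qubit blocks. I group the integer sites of $\mathcal R$ into consecutive blocks of length $n$, initialize the non-positive blocks in $\ket{0^n}$ and the positive blocks in $\ket{\phi_{\mathbf q}}$, and define the block shift $\pi_n$ by $\pi_n(a)_i=a_{i-n}$. Because $\pi_n$ is still an index bijection that merely translates the support, the sign identity $f(a,b)=f(\pi_n(a),\pi_n(b))$ continues to hold by reindexing, so $\alpha_{\pi_n}$ is a $*$-automorphism by the same verification as before; likewise the block swap, which exchanges the $n$ sites of block $0$ with the external register $\mathbb M_{2^n}$, is its own inverse and $*$-preserving. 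The telescoping computation that established the Bell case then goes through verbatim with $n$-site blocks, yielding $(s_{\mathbf q}\otimes s_0\otimes s_0)\bigl(\alpha\otimes\alpha(X)\bigr)=(s_{\mathbf q}\otimes s_{\phi_{\mathbf q}})(X)$.

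Then I would assemble the universal catalyst exactly as in the $2$-qubit theorem: index the copies by the countable set $\{\mathbf q\}$, form $\mathcal A\otimes\mathcal B=\bigotimes_{\mathbf q}(\mathcal R\otimes\mathcal R)$, and define the state $s_\psi=\bigotimes_{\mathbf q} s_{\mathbf q}$. For a target with rational Schmidt data $\mathbf q$ I would apply the block shift and block swap on the $\mathbf q$-component only, acting as the identity on every other factor, which leaves $s_\psi$ invariant while producing $\ket{\phi_{\mathbf q}}$ in the external register; composing with the local unitaries $U_A\otimes U_B$ on that register converts $\ket{\phi_{\mathbf q}}$ into the desired $(U_A\otimes U_B)\ket{\phi_{\mathbf q}}$. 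All of these are local $*$-automorphisms on $\mathcal A\otimes\mathbb M_{2^n}$ and $\mathcal B\otimes\mathbb M_{2^n}$, as required by the definition.

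The hard part will not be the algebra — the block shift/swap verification is a routine rerun of the Bell-case proof — but making the density claim precise, namely that the exactly-embezzlable set of states with rational Schmidt coefficients in an arbitrary local-unitary-rotated basis is dense. I would prove this by choosing, for a given target $\ket{\phi}$ with decomposition $(U_A\otimes U_B)\sum_k\sqrt{p_k}\ket{k}\ket{k}$, the \emph{same} bases $U_A,U_B$ and a rational vector $\mathbf q$ with $p'_k$ close to $p_k$, so that $\bigl\|\,\ket{\phi}-(U_A\otimes U_B)\ket{\phi_{\mathbf q}}\,\bigr\|=\bigl\|\sum_k(\sqrt{p_k}-\sqrt{p'_k})\ket{k}\ket{k}\bigr\|$, which is controlled purely by the coefficient discrepancy. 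Since $\sqrt{\cdot}$ is continuous on $[0,1]$, this vanishes as $\mathbf q\to p$, establishing exact embezzlement on a dense subset and approximate embezzlement of every state.
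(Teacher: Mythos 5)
Your proposal is correct and follows essentially the same route as the paper: replace single qubits by $n$-qubit blocks (equivalently, copies of $\mathbb M_{2^n}$, which the paper identifies with the CAR algebra), rerun the shift-and-swap verification, and index the catalyst by the countable set of rational Schmidt vectors. You are in fact slightly more careful than the paper on one point --- explicitly composing with the local unitaries $U_A\otimes U_B$ on the external register and proving density of the rotated rational-Schmidt states via the coefficient estimate --- whereas the paper only asserts density of the Schmidt-coefficient vectors and leaves the basis rotation implicit.
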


\begin{proof}
A quantum system consisting of infinite copies of $n$-qubit states can also be modeled using CAR algebras. Consider $\mathbb M_{2^n}$, the algebra of $2^n \times 2^n$ matrices corresponding to observables on an $n$-qubit quantum system. Define
\begin{equation}
\mathcal M_k^{(n)} := \bigotimes_{i=1}^k \mathbb M_{2^n}
\end{equation}
to be the algebra of $k$ copies of $\mathbb M_{2^n}$, and define the unital embeddings
\begin{equation}
\iota_k: \mathcal M_k^{(n)} \hookrightarrow \mathcal M_{k+1}^{(n)}, \quad A \mapsto A \otimes \mathbb I_{2^n}.
\end{equation}
This yields an inductive system $(\mathcal M_k^{(n)}, \iota_k)$, and we define the inductive limit algebra as
\begin{equation}
\mathcal M^{(n)} := \varinjlim \mathcal M_k^{(n)}. \label{eq:car_n_infty}
\end{equation}
This algebra consists of infinite sequences of operators that act non-trivially on only finitely many qubits and as identity elsewhere. It can be regarded as the algebra of observables on infinitely many copies of an $n$-qubit system and is isomorphic to the CAR algebra.
    We now adapt the 2-qubit embezzlement protocol to the $2n$-qubit setting by applying the same mechanism to $\mathcal M^{(n)}$. The shift operation $\alpha_\pi$ shifts $n$ qubits to the left, and the swap operation $\alpha_{\mathrm{swap}}$ exchanges the $n$ qubits at index 0 with the external $n$-qubit state. The mathematical structure and derivation are essentially the same as in the 2-qubit case, with $X^a Z^b$ replaced by elements $M^a$ drawn from a generating set $\{M\} \subseteq \mathbb M_{2^n}$.

For universal embezzlement of $2n$-qubit states, we again take the catalyst to consist of infinite copies of a dense set of such states. This requires identifying a countably infinite dense subset.

Consider the Schmidt decomposition of a $2n$-qubit state with Schmidt coefficients $\{q_1, q_2, \ldots, q_{2^n}\}$. Define
\begin{equation}
    \ket{\phi_q} = \sum_{i=1}^{2^n} q_i \ket{ii},
\end{equation}
and restrict to $|q_i|^2 \in \mathbb Q^+$. Then the set of vectors $\vec q = (q_1, \ldots, q_{2^n})$ with rational squared amplitudes is a dense subset of the unit sphere in $\mathbb C^{2^n}$. Since each $q_i$ ranges over a countable set, the total collection of such $\vec q$ is also countable. Let us denote this set by $\mathbf Q_n$.

Define $s_q: \mathcal M^{(n)} \otimes \mathcal M^{(n)} \to \mathbb C$ to be the state corresponding to infinite copies of $\ket{\phi_q}$:
\begin{equation}
    s_q(M^a) := \prod_{i=1}^\infty \bra{\phi_q} M^{a_i} \ket{\phi_q}
\end{equation}
for all $M \in \mathbb M_{2^n}$ and where $a$ is a binary string with only finitely many 1s.

We now define the overall quantum system analogously to how we previously constructed $\mathcal R_\infty$, replacing $\mathcal R$ with $\mathcal M^{(n)}$, and reindexing by the set $\mathbf Q_n$. Define:
\begin{equation}
    \mathcal M_\infty^{(n)} := \bigotimes_{q \in \mathbf Q_n} \mathcal M^{(n)},
\end{equation}
and the catalyst state as:
\begin{equation}
    s_n := \bigotimes_{q \in \mathbf Q_n} s_q.
\end{equation}

Since $\mathcal M^{(n)}$ is isomorphic to the CAR algebra, the infinite tensor product $\mathcal M_\infty^{(n)}$ is also isomorphic to the CAR algebra.

The *-automorphisms that implement the embezzlement protocol are defined analogously to those in the 2-qubit case: they act as the $n$-qubit shift and swap operations on the $q$-th register and as the identity elsewhere. These automorphisms are constructed exactly as in Equations~\eqref{eq:2-qubit_univ_shift} and~\eqref{eq:2-qubit_univ_swap}, with $\mathcal R$ replaced by $\mathcal M^{(n)}$.

To perform approximate embezzlement of any $2n$-qubit state $\ket\phi$, locate the state $\ket{\phi_q}$ that is $\epsilon$-close to $\ket\phi$ for any $\epsilon>0$, and perform the embezzlement protocol on $\ket{\phi_q}$. This gives arbitrary approximate embezzlement protocol of any $2n$-qubit state.
\end{proof}

\subsection{Universal Embezzlement on a Dense Set}
    We now want to extend the system to allow embezzlement of arbitrary state of all dimension $n$ simultaneously.
    
\begin{theorem}
    There exists an explicit universal embezzlement protocol in the C*-model with a single catalyst state that enables exact embezzlement of a dense subset of bipartite state in $\mathbb C^{2^n}\otimes \mathbb C^{2^n}$ for all $n\in\mathbb N$.
\end{theorem}
\begin{proof}
    We do this by takin the tensor product of $\mathcal M_\infty^{(n)}$ for all $n$, again, inductively.

    Define $\mathcal M^N$ as
    \begin{equation}
        \mathcal M^N = \bigotimes_{i=1}^N \mathcal M^{(i)}_\infty
    \end{equation}
    with unital embedding
    \begin{equation}
        \iota_N:\mathcal M^N \hookrightarrow \mathcal M^{N+1},\ A \mapsto A\otimes 1.
    \end{equation}
    The indutive system $(\mathcal M^N, \iota_N)$ give rise to the inductive limit algebra as 
    \begin{equation}
        \mathcal M  = \varinjlim \mathcal M^N = \bigotimes_{i=1}^\infty \mathcal M^{(i)}_\infty
    \end{equation}
    which corresponds to Alice and Bob's local C*-algebra.
    
    The overall state is therefore 
   \begin{equation}
       s = \bigotimes_{i=1}^\infty s_i
   \end{equation} 
   To perform exact embezzlement of a $2n$-qubit state state $\ket{\phi_q}$, we first identify the algebra $\mathcal M^{(n)}$, then locate the index $q\in\mathbf Q_n$, and apply the corresponding single-state embezzlement protocol on the state $s_q$. The associated *-automorphism only acts non-trivially only on the register indexed by $q$ in $\mathcal M_\infty^{(n)}$, and as identity everywhere else.

   To perform approximate embezzlement of an arbitrary $2n$-qubit state $\ket\phi$, we locate the corresponding algebra $\mathcal M^{(n)}$, and choose a rational approximation $\ket{\phi_q}$ that is $\epsilon$-close to $\ket\phi$, and apply the embezzlement protocol associated with $s_q$.
\end{proof}
\begin{remark}
    Since each $\mathcal M_\infty^{(n)}$ is isomorphic to the CAR algebra, and the countable tensor product of cAR algebras remains isomorphic to the CAR algebra, the full system $\mathcal M$ is again isomorphic to the CAR algebra. Therefore, although the catalyst state now encodes dense subsets of states of all qubit dimensions simultaneously, the complexity of the underlying C*-algebra remains unchanged from the 2-qubit universal embezzlement system.
\end{remark}
\subsection{Exact Universal Embezzlement with Non-Separable System}\label{sec:non_separable}
Our final result establishes that exact universal embezzlement is possible for arbitary $2n$-qubit state, not merely a dense subset. To achieve this, we have to step into the realm of non-separable C*-algebras. 

\begin{theorem}
    With non-separable C*-algebra, there exists an explicit universal embezzlement protocol in the C*-model with a single catalyst state that can exactly embezzle any arbitray state with $2n$ qubits for all $n\in\mathbb N$.
\end{theorem}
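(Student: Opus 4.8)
The plan is to mirror the dense-set construction of the preceding theorem, but to replace every countable index set $\mathbf{Q}_n$ by the \emph{full}, uncountable parameter set of Schmidt profiles, so that exact embezzlement becomes available for every target rather than only a dense subset. Concretely, for each $n$ I would let $\Delta_n$ denote the set of all Schmidt-coefficient vectors $\vec q = (q_1,\ldots,q_{2^n})$ with $q_i \ge 0$ and $\sum_i q_i^2 = 1$, and set $\ket{\phi_q} = \sum_{i=1}^{2^n} q_i\ket{ii}$. Since relative phases and local basis changes are implemented by local unitaries (i.e.\ inner *-automorphisms of the external $\mathbb M_{2^n}$ registers), these Schmidt profiles parametrize every $2n$-qubit pure state up to local operations, and $\Delta_n$ is uncountable. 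For each $q\in\Delta_n$ I define the state $s_q$ on a copy of $\mathcal M^{(n)}$ exactly as before, as the product state given by infinitely many copies of $\ket{\phi_q}$.

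The only structural change is the index set of the tensor product. I would define
\[
\mathcal M_\infty^{(n)} := \bigotimes_{q\in\Delta_n}\mathcal M^{(n)},
\]
where the uncountable tensor product is taken as the inductive limit over the directed system of finite subsets $F\subseteq\Delta_n$, ordered by inclusion, with the unital embeddings $A \mapsto A\otimes 1$, completed in the $C^*$-norm. Every element of this algebra is finitely supported, acting non-trivially on only finitely many $q$-registers, which is exactly what makes the later definitions well-posed. The catalyst is the product state $s_n = \bigotimes_{q\in\Delta_n} s_q$, defined on finitely-supported generators and extended by continuity; unitality and positivity follow factorwise since each $s_q$ is a state and we work with $\otimes_{\max}$. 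Combining over all $n$ by the same device, $\mathcal M = \varinjlim_N \bigotimes_{i=1}^N \mathcal M_\infty^{(i)}$ with global catalyst $s = \bigotimes_i s_i$, yields Alice and Bob's algebra.

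The embezzlement automorphisms require no new ideas. For a target with Schmidt profile $q\in\Delta_n$, I locate the register $q$ inside $\mathcal M_\infty^{(n)}$ and apply the single-state automorphism $\alpha_q\otimes\alpha_q$ from the dense-set construction, acting as the $n$-qubit shift-then-swap on that register and as the identity on all other, now uncountably many, registers. Because every algebra element is finitely supported, this local action commutes with the inductive-system embeddings and is isometric on the dense finitely-supported subalgebra, so it extends uniquely to a *-automorphism of the full algebra; its effect on $s$ reproduces verbatim the single-state computation already verified, yielding $\ket{\phi_q}$ on the external system, after which a local unitary on the external $\mathbb M_{2^n}$ registers installs any desired phases and Schmidt basis. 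This gives \emph{exact} embezzlement of an arbitrary $2n$-qubit target for every $n$.

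The main obstacle I anticipate is foundational rather than computational: making the uncountable tensor product and its product state rigorous, and confirming non-separability. I would handle the first by the standard inductive-limit definition over finite subsets (the same device already used to build $\mathcal R_\infty$, now with an uncountable directed index set), and the product state by the universal property of $\otimes_{\max}$ together with factorwise positivity. Non-separability then follows from the uncountability of $\Delta_n$ by an elementary argument: choosing in each register $q$ a fixed nontrivial self-adjoint unitary, say the Pauli $X$ at site $(q,0)$, yields an uncountable family of elementary tensors that are pairwise at norm-distance $2$, which precludes any countable dense subset. The single delicate point, that $\alpha_q$ genuinely extends to the completion, is resolved exactly as above by isometry and compatibility with the embeddings.
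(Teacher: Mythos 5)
Your proposal is correct and follows essentially the same route as the paper: a finitely supported (inductive-limit over finite subsets) uncountable tensor product of CAR-type registers indexed by all Schmidt profiles, a product catalyst state defined factorwise, and the unchanged single-register shift-and-swap automorphisms. The only differences are elaborations the paper leaves implicit (the explicit non-separability argument and the check that the automorphisms extend to the completion), not a different approach.
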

\begin{proof}
    
The construction follows the same intuition as the dense set embezzlement protocol. Rather than taking a tensor product over a countable dense subset of states, we instead take a continuous tensor product over all the states, and apply the embezzlement protocol on the register corresponding to the target state.

    For clarity, we present the construction for exact universal embezzlement of arbitrary 2-qubit states; the generalization to \( 2n \)-qubit states proceeds analogously.

Let \( \ket{\psi_x} = x_0\ket{00} + x_1\ket{11} \) with \( x_0, x_1 \in \mathbb{R}^+ \) and \( x_1 \neq 0 \), and define the parameter \( x = x_0 / x_1 \). Define the state \( s_x : \mathcal{R} \otimes \mathcal{R} \to \mathbb{C} \) by
\[
    s_x (X^a Z^b \otimes X^c Z^d) = \prod_{i=0}^{-\infty} \langle 00 | X^{a_i} Z^{b_i} \otimes X^{c_i} Z^{d_i} | 00 \rangle \cdot \prod_{i=1}^{\infty} \langle \psi_x | X^{a_i} Z^{b_i} \otimes X^{c_i} Z^{d_i} | \psi_x \rangle.
\]

Let \( \mathcal{R}_x := \mathcal{R} \) denote a copy of the CAR algebra, indexed by \( x \in \mathbb{R}^+ \); we refer to each \( \mathcal{R}_x \) as a \emph{register}. Define the C*-algebra \( \mathcal{A} \) as the \textbf{finitely supported tensor product}
\[
    \mathcal{A} := \bigotimes_{x \in \mathbb{R}^+}^{\mathrm{fin}} \mathcal{R}_x,
\]
defined as the algebraic union
\[
    \bigotimes_{x \in \mathbb{R}^+}^{\mathrm{fin}} \mathcal{R}_x := \bigcup_{\substack{F \subset \mathbb{R}^+ \\ \text{finite}}} \bigotimes_{x \in F} \mathcal{R}_x.
\]
Elements of \( \mathcal{A} \) are norm-closed finite linear combinations of tensors \( \bigotimes_{x \in \mathbb{R}^+} a_x \) where:
\begin{itemize}
    \item \( a_x = 1 \) for all but finitely many \( x \in \mathbb{R}^+ \),
    \item each \( a_x \in \mathcal{R}_x = \mathcal{R} \),
    \item the norm is given by the C*-norm on the finite tensor product and extended by completion.
\end{itemize}

Define the catalyst state as
\[
    s := \bigotimes_{x \in \mathbb{R}^+}^{\mathrm{fin}} s_x.
\]
This state is well-defined on \( \mathcal{A} \), as elements of \( \mathcal{A} \) act non-trivially on only finitely many registers.

Although the algebra \( \mathcal{A} \) is non-separable, it is well-suited for our purposes: only finitely many registers are accessed at any time, and we need only act on the register corresponding to a single \( \mathcal R_x\) when performing embezzlement. The *-automorphisms used are exactly the same as those defined in the dense-set 2-qubit embezzlement protocol, and the same analysis shows that this yields an exact embezzlement protocol for any 2-qubit state.

The generalization to exact embezzlement of any \( 2n \)-qubit state follows by replacing \( \mathcal{R} \) with \( \mathcal{M}^{(n)} \), and defining the product over all normalized Schmidt vectors indexed appropriately.
\end{proof}

Even though the C*-algebra $\mathcal A$ is non-separable, nothing fundamental changes in the construction. All local operations in the protocol still act on finitely many registers, just like before. The *-automorphisms we use are the same as in the separable case -- they only touch the part of the system involved in the embezzlement and leave the rest unchanged. The catalyst state is still a tensor product of single-state catalysts, and the only difference is that now the index set is uncountable instead of countable.

In other words, the non-separability of $\mathcal A$ doesn't make the protocol any more complicated. We're not accessing or using any uncountably infinite part of the system at once -- every step still only needs to touch finitely many components. The larger algebra just gives us access to all possible target states at once, but the structure and mechanics of the protocol remain exactly the same.
\section{Discussions}
In a way, the construction can be made more succinct by factoring out the infinitely many $\ket{00}$ states in each component of the catalyst and replacing them with a single copy of infinitely many $\ket{00}$ states. However, this makes the mathematical expression of the protocol more cumbersome, and since everything is already infinite-dimensional, it does not actually reduce the overall complexity. We also note that there might exist more compact explicit embezzlement protocols that avoid taking the tensor product over all states, and instead rely on more sophisticated operations on a simpler catalyst. But since exact embezzlement necessarily requires infinite dimensions, any such “simpler” catalyst would still have to be infinite-dimensional, and therefore has the same essential complexity as our construction. We view the apparent redundancy of the catalyst state as a feature rather than a drawback, since having a catalyst built from explicit tensor products only makes the construction more transparent and intuitive.

We note that the catalyst state used in the dense-set 2-qubit embezzlement protocol is an infinite tensor product over states with rational Schmidt coefficients. When this state is passed through the GNS construction, it gives rise to the Type III$_1$ factor von Neumann algebra appearing in the Araki–Woods classification \cite{araki1968classification}. This aligns with the result of \cite{van2024embezzlement}, where the authors show that any universal embezzling state must give rise to a Type III$_1$ factor.

Our result on exact universal embezzlement for all states similarly requires a non-separable C*-algebra, mirroring the results in \cite{van2002embezzling}, where the authors mention the existence of exact universal embezzling states in non-separable Hilbert spaces.

\section{Acknowledgements}
The work was supported by the Novo Nordisk Foundation (grant NNF20OC0059939 ‘Quantum for Life’) and the European Research Council (grant agreement number 101078107 ‘QInteract)’ I would like thank Vern Paulsen for clarifying some of the mathematical concepts, and Richard Cleve for helpful discussions. I would also like to thank Ismael Profirio for pointing out that the non-separability for exact universal embezzlement has been proven in \cite{van2002embezzling}.

\bibliography{reference}
\bibliographystyle{plain}

\appendix
\section*{Appendix A: Introduction to C*-algebraic Model}

C*-algebras provide a natural and flexible framework to describe quantum systems algebraically. At their core, they generalize algebras of bounded operators on Hilbert spaces but do so in a fully abstract, coordinate-free manner. This abstraction makes them extremely useful in infinite-dimensional quantum theory, quantum statistical mechanics, and beyond.

We look at a detailed overview of C*-algebras, states, the GNS construction, and tensor product constructions. These notions form the algebraic backbone for modeling quantum systems in infinite dimensions and are essential for understanding the operator-algebraic framework utilized in this work. Our exposition follows standard references such as \cite{arveson1998invitation, bratteli1982operator, paulsen2002completely}.

\setcounter{section}{1}
\subsection{Basics of C*-Algebras}
\begin{definition}
A \emph{C*-algebra} is a Banach algebra \(\mathcal{A}\) over the complex numbers equipped with an involution \( * : \mathcal{A} \to \mathcal{A} \) satisfying the \emph{C*-identity}:
\[
\| a^* a \| = \| a \|^2 \quad \text{for all } a \in \mathcal{A}.
\]
More concretely, \(\mathcal{A}\) is a complex algebra with norm \(\|\cdot\|\) and a conjugate-linear involution \(a \mapsto a^*\) such that for all \(a,b \in \mathcal{A}\) and \(\lambda, \mu \in \mathbb{C}\):
\begin{itemize}
    \item \((a^*)^* = a\),
    \item \((ab)^* = b^* a^*\),
    \item \((\lambda a + \mu b)^* = \overline{\lambda} a^* + \overline{\mu} b^*\),
    \item \(\mathcal{A}\) is complete with respect to \(\|\cdot\|\).
\end{itemize}
\end{definition}

Intuitively, a C*-algebra is an abstract algebraic structure that generalizes the algebra of bounded linear operators on a Hilbert space, capturing both algebraic and analytic properties of quantum observables \cite[Ch.~1]{arveson1998invitation}.

\begin{definition}
A \emph{*-homomorphism} \(\phi : \mathcal{A} \to \mathcal{B}\) between two C*-algebras is a linear map preserving the algebraic operations and the involution:
\[
\phi(ab) = \phi(a) \phi(b), \quad \phi(a^*) = \phi(a)^*,
\]
for all \(a,b \in \mathcal{A}\).
\end{definition}

If \(\mathcal{A}\) and \(\mathcal{B}\) are unital, *-homomorphisms also satisfy \(\phi(1_{\mathcal{A}}) = 1_{\mathcal{B}}\). Such maps are automatically norm-contracting \cite[Prop.~1.3.6]{bratteli1982operator}, i.e., \(\|\phi(a)\| \leq \|a\|\), ensuring continuity.

A *-homomorphism that is bijective is called a *-isomorphism, and a *-isomorphism from \(\mathcal{A}\) to itself is a *-automorphism. In our C*-algebraic formalism for quantum information, *-automorphisms describe the dynamics of quantum systems via transformations on the set of observables.

\begin{definition}
An \emph{abstract state} \(\omega\) on a unital C*-algebra \(\mathcal{A}\) is a positive linear functional normalized by \(\omega(1) = 1\). That is,
\[
\omega(a^* a) \geq 0 \quad \text{and} \quad \omega(1) = 1.
\]
\end{definition}

States generalize the notion of density operators (quantum states) in Hilbert spaces to the algebraic setting, providing expectation values for observables \(a \in \mathcal{A}\). The set of all states on \(\mathcal{A}\) forms a convex, weak*-compact subset of the dual space \(\mathcal{A}^*\) \cite[Ch.~4]{paulsen2002completely}. To connect an abstract state \(\omega\) with a quantum state \(\ket\psi\) in a Hilbert space \(\mathcal{H}\), we write \(\omega(a) = \bra\psi \pi(a) \ket\psi\), where \(\pi(a)\) is the Hilbert space representation of \(a\).

\subsection{The GNS Construction}

The Gelfand–Naimark–Segal (GNS) construction provides a canonical way to represent a C*-algebra \(\mathcal{A}\) as bounded operators on a Hilbert space, starting from a state \(\omega\). This bridges the abstract algebraic formulation with the Hilbert space formalism of quantum mechanics.

\begin{theorem}[GNS Construction]
Given a state \(\omega\) on \(\mathcal{A}\), there exists a triple \((\pi_\omega, \mathcal{H}_\omega, \Omega_\omega)\) where:
\begin{itemize}
    \item \(\mathcal{H}_\omega\) is the Hilbert space obtained by completing the quotient \(\mathcal{A} / \mathcal{N}_\omega\), with inner product \(\langle [a], [b] \rangle = \omega(b^* a)\),
    \item \(\pi_\omega : \mathcal{A} \to B(\mathcal{H}_\omega)\) is a *-representation defined by \(\pi_\omega(a)[b] = [ab]\),
    \item \(\Omega_\omega = [1]\) is a cyclic vector satisfying \(\omega(a) = \langle \Omega_\omega, \pi_\omega(a) \Omega_\omega \rangle\),
\end{itemize}
where \(\mathcal{N}_\omega = \{ a \in \mathcal{A} : \omega(a^* a) = 0 \}\) is a left ideal.
\end{theorem}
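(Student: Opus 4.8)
The plan is to build the triple $(\pi_\omega, \mathcal{H}_\omega, \Omega_\omega)$ directly from $\omega$, using positivity at each stage. First I would introduce the sesquilinear form $\langle a, b\rangle := \omega(b^* a)$ on $\mathcal{A}$. Positivity of $\omega$ makes this form positive semidefinite, so it obeys the Cauchy--Schwarz inequality $|\omega(b^* a)|^2 \le \omega(a^* a)\,\omega(b^* b)$, which does most of the structural work. From it I would deduce that $\mathcal{N}_\omega = \{a : \omega(a^* a) = 0\}$ coincides with the null space of the form, is a linear subspace, and is a \emph{left} ideal: for $a \in \mathcal{N}_\omega$ and $x \in \mathcal{A}$, Cauchy--Schwarz bounds $\omega\bigl((xa)^*(xa)\bigr)$ by a factor of $\omega(a^* a) = 0$, forcing $xa \in \mathcal{N}_\omega$.

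Next I would pass to the quotient $\mathcal{A}/\mathcal{N}_\omega$ and verify that $\langle [a],[b]\rangle := \omega(b^* a)$ is well-defined (independence of representatives is again Cauchy--Schwarz) and is a genuine inner product, since the null vectors have been quotiented out. This makes $\mathcal{A}/\mathcal{N}_\omega$ a pre-Hilbert space, and I take $\mathcal{H}_\omega$ to be its completion.

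Then I would define the representation on the dense subspace by $\pi_\omega(a)[b] := [ab]$. Well-definedness is exactly the statement that $\mathcal{N}_\omega$ is a left ideal. The crucial analytic point is boundedness:
\[
\|\pi_\omega(a)[b]\|^2 = \omega(b^* a^* a\, b) \le \|a\|^2\,\omega(b^* b) = \|a\|^2\,\|[b]\|^2 .
\]
This reduces to the order inequality $a^* a \le \|a\|^2\,1$ in $\mathcal{A}$: applying $\omega$ to the positive element $b^*\bigl(\|a\|^2 1 - a^* a\bigr)b$ yields the estimate. With boundedness secured, each $\pi_\omega(a)$ extends uniquely to an element of $B(\mathcal{H}_\omega)$, and I would check linearity, multiplicativity $\pi_\omega(ab) = \pi_\omega(a)\pi_\omega(b)$, and the $*$-property via $\langle \pi_\omega(a^*)[c],[b]\rangle = \omega(b^* a^* c) = \langle [c], \pi_\omega(a)[b]\rangle$, which gives $\pi_\omega(a^*) = \pi_\omega(a)^*$.

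Finally I would set $\Omega_\omega := [1]$. Cyclicity is immediate because $\pi_\omega(a)\Omega_\omega = [a]$ sweeps out the dense subspace $\mathcal{A}/\mathcal{N}_\omega$, and the reproducing identity is $\langle \Omega_\omega, \pi_\omega(a)\Omega_\omega\rangle = \omega(1^* a) = \omega(a)$. I expect the boundedness step to be the main obstacle, since it is the only place that invokes the C*-structure beyond mere positivity of $\omega$: establishing $a^* a \le \|a\|^2 1$ rests on the spectral fact that $\|a\|^2 1 - a^* a$ is positive, which I would justify through the continuous functional calculus on the commutative C*-subalgebra generated by the self-adjoint element $a^* a$, using the C*-identity $\|a^* a\| = \|a\|^2$ to locate its spectrum in $[0,\|a\|^2]$. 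Everything else is a routine consequence of Cauchy--Schwarz and the universal property of completion.
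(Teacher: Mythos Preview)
Your proposal is the standard textbook proof of the GNS construction and is correct in every step. The paper, however, does not actually supply a proof of this theorem: it states the GNS construction in Appendix~A as background material, citing standard references (Arveson, Bratteli--Robinson, Paulsen), and then moves on to discuss its consequences for the commuting operator model. So there is no ``paper's own proof'' to compare against; your argument is exactly the one those references contain, and it would serve perfectly well as the omitted proof.

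One minor remark on your left-ideal argument: you invoke Cauchy--Schwarz to conclude $\omega(a^* x^* x a) = 0$ from $\omega(a^* a) = 0$, which works (viewing $\omega(a^* x^* x a)$ as $\langle x^* x a, a\rangle$), but the cleaner route is the same C*-inequality $x^* x \le \|x\|^2 1$ you later use for boundedness: conjugating by $a$ gives $a^* x^* x a \le \|x\|^2 a^* a$, and positivity of $\omega$ finishes it directly. Either way the conclusion is valid.
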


This construction realizes any abstract state as a vector state in a Hilbert space representation. Importantly, when the C*-algebra is constructed as a tensor product of local algebras, the GNS representation yields commuting local operator algebras on the same Hilbert space, matching the commuting operator framework. A detailed discussion of this equivalence can be found in \cite[Appendix~C]{cleve2022constant}.

\subsection{Tensor Products of C*-Algebras}

Given two C*-algebras \(\mathcal{A}\) and \(\mathcal{B}\), their algebraic tensor product \(\mathcal{A} \otimes_{\mathrm{alg}} \mathcal{B}\) can be completed in multiple ways depending on the norm used. The most important completions are:

\begin{itemize}
    \item \textbf{Minimal (spatial) tensor product} \(\mathcal{A} \otimes_{\min} \mathcal{B}\): The minimal (spatial) tensor product is defined as
    \begin{equation}
    \|x\|_{\min} = \sup\{\|\pi_1\otimes \pi_2(x)\|:\pi_1:\mathcal A\to\mathbb B(\mathcal H_1), \pi_2: \mathcal B\to\mathbb B(\mathcal H_2) \text{ are unital *-homomorphisms}\}    
    \end{equation}
    This is the physically relevant tensor product for describing non-interacting quantum systems with separate subsystems \cite{bratteli1982operator}.

    \item \textbf{Maximal tensor product} \(\mathcal{A} \otimes_{\max} \mathcal{B}\): Defined by the maximal C*-norm:
    \[
    \|x\|_{\max} = \sup \{ \| \pi(x) \| : \pi: \mathcal{A} \otimes_{\mathrm{alg}} \mathcal{B}\to \mathbb B(\mathcal H) \text{ is a unital *-homomorphism} \},
    \]
    where the supremum is taken over all *-representations (not necessarily spatial). This yields the most general composite system and may encode stronger-than-quantum correlations \cite{bratteli1982operator}.
\end{itemize}

These tensor products generally differ unless one of the algebras is \emph{nuclear}, in which case all reasonable C*-tensor norms coincide \cite{arveson1998invitation}.

In general, we use the tensor product of C*-algebras to model local quantum systems. This formalism has the advantage of providing a canonical way to separate different local subsystems via tensor products. States and operators on individual subsystems are easy to define and can be combined naturally using the algebraic tensor product. Interestingly, the C*-algebraic model can be shown to be equivalent to the commuting operator model through the GNS construction. A more detailed discussion of this equivalence can be found in the appendix of \cite{cleve2022constant}. This equivalence enables us to use the C*-algebraic approach to meaningfully define entanglement and non-local quantum systems even in the infinite-dimensional setting.

\section*{Appendix B: The CAR algebra}
The CAR (Canonical Anticommutation Relations) algebra is a fundamental object in studying infinite dimensional quantum systems, especially those involving infinitely many qubits similar to the ones in the fermionic chain. In this appendix, we give the standard definition of the CAR algebra and explain the properties that are relevant to the constructions in our paper.
\begin{definition}[CAR Algebra]
Let $\mathcal H$ be a separable Hilbert space. The CAR algebra over $\mathcal H$, denoted by $\mathrm{CAR}(\mathcal H)$, is the C*-algebra generated by elements $a(f)$ for $f \in \mathcal H$, satisfying the canonical anticommutation relations:
\begin{equation}
a(f) a(g) + a(g) a(f) = 0,\quad a(f)a(g)^* + a(g)^* a(f) = \langle f, g\rangle 1,
\end{equation}
where $\langle f, g \rangle$ is the inner product between $f$ and $g$.
The algebra $\mathrm{CAR}(\mathcal H)$ is uniquely determined (up to *-isomorphism) by these relations.
\end{definition}
\begin{example}[CAR on Single Qubit]
Take $\mathcal H = \mathbb C$. Then the CAR algebra coincides with $\mathbb M_2$, as shown below. We have a single generator $a$ satisfying $\{a, a^*\} = 1$. We can write:
\begin{equation}
a = \begin{pmatrix} 0 & 1\\ 0 & 0 \end{pmatrix},\quad
a^* = \begin{pmatrix} 0 & 0 \\ 1 & 0 \end{pmatrix} \label{eq:car_c1_a}
\end{equation}
This operator can also be expressed in terms of Pauli matrices:
\begin{equation}
a = \frac{1}{2}(X + iY),\quad a^* = \frac{1}{2}(X - iY).
\end{equation}
This algebra therefore corresponds to the algebra of observables for a single-qubit quantum system.
\end{example}
\begin{example}[CAR on $n$ Qubits]
We now consider the algebra on $n$ qubits. Taking $n$ copies of $\mathbb M_2$, the observable algebra is given by:
\begin{equation}
\mathcal R_n = \bigotimes_{i=1}^n \mathbb M_2.
\end{equation}
We can view $\mathcal R_n$ as the CAR algebra over $\mathbb C^n$, i.e. $\mathrm{CAR}(\mathbb C^n) \cong \mathbb M_{2^n}$.
We show this inductively. Let $\mathcal A_1 = \mathrm{CAR}(\mathbb C)$. We define:
\begin{equation}
\mathcal A_{n+1} \cong \mathcal A_n \otimes \mathcal A_1,
\end{equation}
with the embedding:
\begin{equation}
a_i \mapsto a_i \otimes I \quad (\forall i \leq n),\qquad a_{n+1} \mapsto \Gamma \otimes a,
\end{equation}
where:
\begin{itemize}
\item $a$ is the annihilation operator from Equation~\ref{eq:car_c1_a},
\item $\Gamma$ is the parity operator on $\mathcal A_n$ defined as:
\begin{equation}
\Gamma = (-1)^N,\quad N = \sum_{i=1}^n a_i^* a_i,
\end{equation}
where $(-1)^N$ is defined by spectral calculus: if $N \ket{\psi} = n \ket{\psi}$, then $(-1)^N \ket{\psi} = (-1)^n \ket{\psi}$.
\end{itemize}
This ensures:
\begin{equation}
\{ \Gamma \otimes a,\ a_i \otimes I \} = 0,
\end{equation}
so $a_{n+1}$ anticommutes with all previous generators, and we obtain $\mathrm{CAR}(\mathbb C^{n+1}) \cong \mathbb M_{2^{n+1}}$.
\end{example}
\begin{example}[Infinite Qubit CAR Algebra]
To describe infinitely many qubits, we take the inductive limit of $\mathcal R_n$ using the embedding $\iota_n: \mathcal R_n \hookrightarrow \mathcal R_{n+1}$ defined by $A \mapsto A \otimes I_2$. This defines:
\begin{equation}
\mathcal R = \varinjlim \mathcal R_n = \bigotimes_{i=1}^\infty \mathbb M_2.
\end{equation}
This infinite tensor product is the CAR algebra used in our construction. Unless otherwise noted, references to the CAR algebra refer to this $\mathcal R$.
\end{example}
\begin{remark}
We also consider systems where each component is a copy of an $n$-qubit system, i.e., $\mathbb M_{2^n}$, and take the infinite tensor product:
\begin{equation}
\mathcal M^{(n)} = \bigotimes_{i=1}^\infty \mathbb M_{2^n}.
\end{equation}
Since $\mathbb M_{2^n} \cong \mathrm{CAR}(\mathbb C^n)$ and $\mathrm{CAR}(\mathbb C^n) \cong \bigotimes_{i=1}^n \mathbb M_2$, it follows that:
\begin{equation}
\varinjlim \mathbb M_{2^n} \cong \varinjlim (\mathbb M_2^{\otimes n}) \cong \varinjlim \mathbb M_2,
\end{equation}
so $\mathcal M^{(n)}$ is again isomorphic to the same infinite qubit CAR algebra $\mathcal R$.
\end{remark}
\begin{theorem}[Tensor Product of CAR Algebras]
Let $\mathcal R$ be the CAR algebra. Then $\mathcal R \otimes \mathcal R \cong \mathcal R$, and the minimal and maximal tensor products coincide:
\begin{equation}
\mathcal R \otimes_{\min} \mathcal R = \mathcal R \otimes_{\max} \mathcal R.
\end{equation}
\end{theorem}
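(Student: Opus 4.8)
The plan is to treat the two assertions separately: first establish that the minimal and maximal tensor norms agree, so that the symbol $\mathcal{R} \otimes \mathcal{R}$ is unambiguous, and then exhibit the isomorphism $\mathcal{R} \otimes \mathcal{R} \cong \mathcal{R}$ by a cofinality argument on inductive limits. Throughout I would rely on the description of $\mathcal{R}$ from the appendix as $\mathcal{R} = \varinjlim_n \mathbb{M}_{2^n}$, the inductive limit of the finite-dimensional matrix algebras $\mathcal{R}_n = \mathbb{M}_{2^n}$ along the unital embeddings $A \mapsto A \otimes I_2$.

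For the coincidence $\mathcal{R} \otimes_{\min} \mathcal{R} = \mathcal{R} \otimes_{\max} \mathcal{R}$, I would invoke nuclearity. Every finite-dimensional C*-algebra is nuclear, and nuclearity is preserved under inductive limits of C*-algebras; hence $\mathcal{R}$ is nuclear. By the defining property of nuclear algebras, the algebraic tensor product admits a unique C*-completion, so the two norms coincide. Equivalently, one can argue directly that at each finite stage $\mathbb{M}_{2^n} \otimes \mathbb{M}_{2^n} \cong \mathbb{M}_{2^{2n}}$ carries a unique C*-norm, and that this uniqueness passes to the limit.

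For the isomorphism, I would use that the tensor product (maximal, equivalently minimal, by the previous step) commutes with inductive limits, so that
\begin{equation}
\mathcal{R} \otimes \mathcal{R} = \left(\varinjlim_n \mathbb{M}_{2^n}\right) \otimes \left(\varinjlim_m \mathbb{M}_{2^m}\right) \cong \varinjlim_n \left(\mathbb{M}_{2^n} \otimes \mathbb{M}_{2^n}\right) \cong \varinjlim_n \mathbb{M}_{2^{2n}},
\end{equation}
using $\mathbb{M}_{2^n} \otimes \mathbb{M}_{2^n} \cong \mathbb{M}_{2^{2n}}$ together with the compatibility of the connecting maps $\iota_n \otimes \iota_n$. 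Now observe that $\{\mathbb{M}_{2^{2n}}\}_n$ is exactly the cofinal subsystem of $\{\mathbb{M}_{2^m}\}_m$ obtained by restricting to even indices $m = 2n$, with the same unital embeddings; since the inductive limit over a cofinal subsystem coincides with the limit over the whole system, $\varinjlim_n \mathbb{M}_{2^{2n}} \cong \varinjlim_m \mathbb{M}_{2^m} = \mathcal{R}$. As an alternative one may appeal to Glimm's classification of UHF algebras by supernatural number: both $\mathcal{R}$ and $\mathcal{R} \otimes \mathcal{R}$ are UHF of type $2^\infty$, hence isomorphic.

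The main obstacle is the analytic bookkeeping rather than any conceptual difficulty: one must verify that tensoring the connecting maps yields a genuine inductive system whose C*-inductive limit really is $\mathcal{R} \otimes \mathcal{R}$, i.e.\ that completing-then-tensoring agrees with tensoring-then-completing. This is precisely where nuclearity (the coincidence of $\otimes_{\min}$ and $\otimes_{\max}$) does the work, since it removes any ambiguity in the norm at every finite stage and in the limit; once that is secured, the cofinality identification is immediate.
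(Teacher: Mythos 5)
Your proposal is correct and follows essentially the same route as the paper: nuclearity of $\mathcal{R}$ (inherited from the finite-dimensional building blocks through the inductive limit) gives the coincidence of the minimal and maximal norms, and the isomorphism $\mathcal{R}\otimes\mathcal{R}\cong\mathcal{R}$ comes from re-indexing the infinite tensor product, which the paper phrases as ``relabeling of tensor indices'' and you make precise via the cofinal subsystem $\{\mathbb{M}_{2^{2n}}\}_n$. Your version is somewhat more careful about why tensoring commutes with the inductive limits, and the appeal to Glimm's classification is a valid alternative, but no new idea is involved.
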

\begin{proof}
We observe:
\begin{equation}
\mathcal R \otimes \mathcal R \cong \left( \bigotimes_{i=1}^\infty \mathbb M_2 \right) \otimes \left( \bigotimes_{i=1}^\infty \mathbb M_2 \right) \cong \bigotimes_{i=1}^\infty \mathbb M_2,
\end{equation}
up to relabeling of tensor indices.
To see that the minimal and maximal tensor products coincide, we note that $\mathbb M_2$ is nuclear, and nuclearity is preserved under inductive limits. \cite[Thm 6.3.6]{bratteli1982operator} Since each $\mathbb M_2$ is nuclear and the CAR algebra is built as an inductive limit of such algebras, $\mathcal R$ is nuclear, and thus its tensor product is unambiguous.
\end{proof}
\begin{remark}
In Section~\ref{sec:non_separable}, we defined an uncountable tensor product of CAR algebras with finite support. The same argument applies: since each component is nuclear and the tensor product has finite support, the minimal and maximal C*-tensor products agree. The norm is unambiguous in this case.
\end{remark}

\end{document}